\newtheorem{theorem}{Theorem}
\newtheorem{corollary}[theorem]{Corollary}
\newtheorem{lemma}[theorem]{Lemma}
\newtheorem{proposition}[theorem]{Proposition}
\newtheorem{example}[theorem]{Example}
\newtheorem{remark}[theorem]{Remark}
\newtheorem{remarks}[theorem]{Remarks}
\newcommand{\FF}{{\mathbb F}}
\newcommand{\K}{{\mathcal K}}
\newcommand{\KK}{{\mathbb K}}
\newcommand{\G}{{\mathcal G}}
\newcommand{\M}{{\mathcal M}}
\DeclareMathOperator{\rk}{rank}
\DeclareMathOperator{\ch}{char}
\begin{document}

\title{Computing the dimension of ideals in group algebras, \\ with an application to coding theory} 

\author{Michele Elia\thanks{Politecnico di Torino, Italy; e-mail: michele.elia@polito.it},~
Elisa Gorla\thanks{University of Neuch\^atel, Switzerland; e-mail: elisa.gorla@unine.ch.} }

\maketitle

\thispagestyle{empty}

\begin{abstract}
\noindent
The problem of computing the dimension of a left/right ideal 
in a group algebra $\FF[\G]$ of a finite group $\G$ over a field $\FF$ is considered. 
The ideal dimension is related to the rank of a matrix 
originating from a regular left/right representation of $\G$;
in particular, when $\FF[\G]$ is semisimple,  the dimension of a principal ideal is equal to the rank of the 
matrix representing a generator. 
From this observation, a bound and an efficient algorithm to compute the dimension of an ideal in a group ring are established. 
Since group codes are ideals in finite group rings, the algorithm allows 
efficient computation of their dimension.
\end{abstract}

\paragraph{Keywords:} Group algebra, ideal, group code, representation, 
rank, characteristic polynomial. 

\vspace{2mm}
\noindent
{\bf Mathematics Subject Classification (2010): 20C05, 16D25, 16S34}

\vspace{8mm}

\section{Introduction and preliminaries}\label{sect1}

Let $\mathcal G=\{g_1,g_2, \ldots,g_n\}$ be a finite multiplicative group of 
order $n=|\mathcal G|$, with neutral element $g_1=1$. Let $\mathbb F$ be a field of characteristic $p$.
Finite fields of order $q=p^m$ are denoted as $\mathbb F_q$.
The group algebra $\mathbb F[\mathcal G]$ 
 of $\mathcal G$ over $\mathbb F$ consists of the formal sums
 \begin{equation}
   \label{dede1}
\sum_{j=1}^{n} \alpha_i g_i ~~~~\alpha_i \in \mathbb F, ~~g_i \in \mathcal G,
 \end{equation}
where the $\alpha_i$s are called coefficients. The sum in $\FF[\mathcal G]$ is 
 defined coefficientwise, that is the coefficients of the same $g_j$ are added following
 the addition in $\mathbb F$. The product is performed by applying the distributive law,
and the group elements are multiplied following the rule of $\mathcal G$.
The group algebra $\mathbb F[\mathcal G]$ is a vector space of dimension $n$
 over the field $\mathbb F$, and has the structure of an associative ring with identity. 
It is commutative if and only if $\G$ is commutative.
It emerges that the structure of the ideals 
depends on the group $\mathcal G$ and the field characteristic. 
If the field characteristic does not divide the group order, or is $0$, the group 
 ring is semisimple by Maschke's Theorem (see~\cite{maschke}). 
In addition, every ideal is principal and generated by an idempotent \cite{curtis}. 
If the field characteristic divides the group order, then the group ring 
is not in general semisimple \cite{navarro}. 
The problem of how to efficiently compute the dimension of an ideal is examined, via an approach using some elementary tools from representation theory.

\noindent
Every representation $D:\G\longrightarrow GL_m(\KK)$ of $\mathcal G$ 
over an extension field $\mathbb K$ of $\mathbb F$ induces a representation 
of the group algebra $\mathbb F[\mathcal G]$, which we denote again by $D$
\begin{equation}\label{dede1rap}
\begin{array}{rcl}
D: \FF[\G] & \longrightarrow & \M_m(\KK) \\ 
\sum_{j=1}^{n} \alpha_j g_j & \longmapsto & \sum_{j=1}^{n} \alpha_j D(g_j).
\end{array}
\end{equation}
Here, $\M_m(\KK)$ is the ring of $m\times m$ matrices with entries in $\KK$. 
In particular, a regular representation of $\G$ induces a representation of $\FF[\G]$ over $\mathbb F$. 

\noindent
An application to group codes will be considerd. A group code of length $n$ is a linear code, which is the image of an ideal $I\subseteq\FF[\G]$ via 
an isomorphism $\phi:\FF[\G]\rightarrow\FF^n$. In other words, an ideal of $\FF[\G]$ is a 
group code in $\FF^n$, for a given choice of a basis of $\FF[\G]$. Denote by $C=\phi(I)$ the group code 
corresponding to the ideal $I$, then $C$ is an $[n,k]$-code, where $k=\dim_{\FF} I$ (see~\cite{pless} for details). In this context, the computation of the dimension of  $I$ is one of the central problems. 

\noindent
The main results of the paper are contained in Section~\ref{main}:
 Proposition~\ref{theorem1} relates the dimension of a proper left/right ideal
 $I\subset\FF[\G]$ to the rank of a matrix constructed using a regular right/left
 representation of $\G$.
 Proposition~\ref{id_gen} and Theorem~\ref{corol} discuss how to compute an idempotent
 generator of a given left/right ideal, and show that the dimension of the left/right ideal
 that it generates can be obtained from the characteristic polynomial of the matrix from  
 Proposition~\ref{theorem1}. 
Corollary~\ref{charpoly} reduces the computation of the dimension of a principal left/right ideal to the computation of a 
characteristic polynomial, even in the case when the ideal is not generated by an idempotent. 
Section~\ref{exs} discusses the applications to coding theory and  presents some examples.

\section{Ideal dimension}\label{main}

Computing the dimension of a left/right ideal in $\mathbb F[\mathcal G]$ 
has both theoretical relevance and practical applications.
The proposed solution is based on the representation of
$\mathbb F[\mathcal G]$ induced by a regular representation of
$\mathcal G$. In the next proposition, the dimension of a left/right ideal is related
to the rank of a matrix built using the regular right/left representation of $\G$. 
The rest of the section discusses how to efficiently compute this rank;
in particular, it is related it to the characteristic polynomial of a suitable matrix.

\begin{proposition}\label{theorem1}
Let $\G=\{g_1,\ldots,g_n\}$ be a group. 
Let $\FF$ be a field, and $A=\FF[\G]$ be the group algebra generated by $\G$.
Let $f_1,\ldots,f_t\in A$, $f_i=\sum_{j=1}^n \alpha_{ij} g_j$ with $\alpha_{ij}\in\FF$.
Let $I=Af_1+\ldots+Af_t$ and $J=f_1A+\ldots+f_tA$. 
Let $\rho(f_1),\ldots,\rho(f_t)$ 
be the matrices corresponding to $f_1,\ldots,f_t$ in the representation of $A$ 
induced by the regular right representation $\rho$ of $\G$. Let 
$\lambda(f_1),\ldots,\lambda(f_t)$ be the matrices corresponding to $f_1,\ldots,f_t$ 
in the representation of $A$ induced by the regular left representation $\lambda$ 
of $\G$.
Define block matrices 
$$\rho(I)=\left[\begin{array}{c} \rho(f_1) \\ \vdots \\ \rho(f_t)
\end{array}\right]\;\;\; \mbox{ and }\;\;\;
\lambda(J)=\left[\begin{array}{c} \lambda(f_1) \\ \vdots \\ \lambda(f_t)\end{array}\right].$$ 
Then $$\dim_{\FF} I=\rk \rho(I)\;\;\; \mbox{ and }\;\;\; \dim_{\FF} J=\rk \lambda(J).$$
\end{proposition}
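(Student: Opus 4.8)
The plan is to recognize the matrix $\rho(f_i)$ as the matrix, in the basis $g_1,\dots,g_n$, of the $\FF$-linear endomorphism ``right multiplication by $f_i$'' of $A$, and then to translate everything into elementary linear algebra about row spaces. So I would first fix the identification $A\cong\FF^n$ sending $x=\sum_i\beta_ig_i$ to the row vector $(\beta_1,\dots,\beta_n)$. By definition of the regular right representation, $\rho(g)$ is the matrix for which $v\,\rho(g)$ is the coordinate vector of $xg$ whenever $v$ is the coordinate vector of $x$; since $\rho$ is extended to $A$ by $\FF$-linearity, for $f=\sum_j\alpha_jg_j$ one has $\rho(f)=\sum_j\alpha_j\rho(g_j)$, and therefore $v\,\rho(f)$ is the coordinate vector of $\sum_j\alpha_j xg_j=xf$. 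In short, under our identification the endomorphism ``right multiplication by $f$'' of $A$ becomes the map $v\mapsto v\,\rho(f)$ on $\FF^n$.

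Next I would handle a single generator. The image of the map $x\mapsto xf_i$ is by definition the left ideal $Af_i$, so under the identification $Af_i$ corresponds to $\{v\,\rho(f_i):v\in\FF^n\}$, which is the row space of $\rho(f_i)$. Hence $\dim_\FF Af_i=\rk\rho(f_i)$; this already disposes of the principal case $t=1$.

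Then I would assemble the generators. Since $I=Af_1+\dots+Af_t$, under the identification $I$ corresponds to the sum of the row spaces of $\rho(f_1),\dots,\rho(f_t)$. But the rows of the block matrix $\rho(I)$ are exactly all the rows of all the $\rho(f_i)$ listed together, so the row space of $\rho(I)$ is precisely that sum; therefore $\dim_\FF I=\dim_\FF(\text{row space of }\rho(I))=\rk\rho(I)$. The statement for $J$ and $\lambda$ follows by the symmetric argument, with left multiplication in place of right multiplication: $\lambda(f)$ represents $x\mapsto fx$, whose image is the right ideal $fA$, so $\dim_\FF f_iA=\rk\lambda(f_i)$ and $\dim_\FF J=\dim_\FF\sum_i f_iA=\rk\lambda(J)$. (Equivalently, one applies the first half of the proof to the opposite algebra $\FF[\G^{\mathrm{op}}]\cong\FF[\G]$, using the isomorphism $\G^{\mathrm{op}}\cong\G$, $g\mapsto g^{-1}$, which interchanges left and right ideals and the two regular representations.)

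I do not expect a genuine obstacle here: the proposition essentially unwinds to definitions. The one step deserving care is the identification in the first paragraph — that the matrix assigned to $f_i$ by the induced representation is literally the matrix of multiplication by $f_i$ — which rests on the definition of the regular representation of $\G$ together with the $\FF$-linearity of its extension to $A$. After that, the only thing to watch is the bookkeeping of conventions (row vectors versus column vectors, and correspondingly stacking the blocks vertically rather than horizontally); fixing the row-vector convention at the outset makes $\rho(I)$ the correct object, and since the rank is unchanged under transposition nothing is lost by this choice.
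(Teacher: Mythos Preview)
Your argument is correct and is essentially the paper's proof: both show that the rows of $\rho(f_k)$ (resp.\ $\lambda(f_k)$) give the coordinate vectors of an $\FF$-spanning set of $Af_k$ (resp.\ $f_kA$) under the identification $A\cong\FF^n$, whence the rank of the stacked block matrix equals $\dim_\FF I$ (resp.\ $\dim_\FF J$). The only cosmetic difference is that the paper, working with a convention under which the $i$-th row of $\rho(f)$ encodes $g_if^*$ rather than $g_if$, first passes through the antipode $f\mapsto f^*=\sum_j a_jg_j^{-1}$ and then observes that $\rho(f)$ and $\rho(f^*)$ differ by a column permutation and hence have equal rank; your row-vector convention makes this detour unnecessary.
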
 

\begin{proof}
For each $f=\sum_{i=1}^n a_ig_i\in A$, define $f^*=\sum_{i=1}^n a_ig_i^{-1}.$ 
We have $\rho(f^*)=(m_{ij})\in\M_n(\FF)$, where 
$$g_if=m_{i1}g_1+\cdots+m_{in}g_n.$$ 
This means that the entries
in the i-th row of $\rho(f_k^*)$ are the coefficients of $g_if_k$
in the $\FF$-basis $g_1,\ldots,g_n$ of $A$. The elements
$g_if_k$ for $i=1,\ldots,n$ and $k=1,\ldots,t$ generate 
$I$ as $\FF$-vector space, hence 
$$\dim_{\FF} I=\rk\left[\begin{array}{c} \rho(f_1^*) \\ \vdots \\ \rho(f_t^*)\end{array}\right].$$ 
The permutation of $\mathcal G$ that exchanges $g_i$ and $g_i^{-1}$ 
for each $i$ induces a permutation of the columns of $\rho(f^*)$ for all $f\in A$, 
which sends $\rho(f^*)$ to $\rho(f)$. Hence
 $$\dim_{\FF}I=\rk\left[\begin{array}{c} \rho(f_1^*) \\ \vdots \\ \rho(f_t^*)\end{array}\right]=\rk\left[\begin{array}{c} \rho(f_1) \\ \vdots \\ \rho(f_t)
\end{array}\right]=\rk\rho(I).$$
Similarly, for the right ideal $J=f_1A+\cdots+f_tA$, consider the regular left 
representation $\lambda$ of $\G$. The entries in the i-th row of $\lambda(f_k)$ are the
coefficients of $f_kg_i$ in the $\FF$-basis
$g_1,\ldots,g_n$. Since the elements $f_kg_i$ generate $J$ as an $\FF$-vector space, 
the rank of $\lambda(J)$ equals the dimension of $J$.
\end{proof}

\noindent
Computation of characteristic polynomials is straightforward, and can be used to compute 
the rank of a matrix. In this context, Proposition~\ref{theorem1} has interesting consequences. 
A preliminary result is first established; it is essentially contained in Theorem~24.2 of~\cite{curtis}, but is presented in a form useful here, together with a proof similar to that of
 \cite[Theorem 2.5.10, page 95]{polcino}.

\begin{lemma}
   \label{idemp}
Let $A$ be a semisimple ring. Let $I\subset A$ be a proper left (or right) ideal. 
Then $I=Ae$ (or $I=eA$), where $e\in A$ is an idempotent and $a=ae$ 
(or $a=ea$) for all $a\in I$.
Further, any idempotent generator of $I$ is of the form $e=1-\epsilon$, 
where $\epsilon$ is an idempotent and $e\epsilon=\epsilon e=0$. \\
If in addition $I$ is a two-sided ideal, then $I=(e)$ for some $e\in A$ idempotent.
 Moreover, $A=I\oplus J$ where $J=0:_A I$ is the annihilator of $I$, and $J=(\epsilon)$.
\end{lemma}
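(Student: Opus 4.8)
The plan is to prove the three assertions in order, using semisimplicity to pass freely between ideals and their complementary summands. First I would establish the decomposition statement. Since $A$ is semisimple, every left ideal is a direct summand of $A$ as a left $A$-module; in particular, for the proper left ideal $I$ there is a left ideal $L$ with $A=I\oplus L$. Writing $1=e+\epsilon$ with $e\in I$, $\epsilon\in L$, and multiplying this identity on the left by $e$ and by $\epsilon$, the directness of the sum forces $e^2=e$, $\epsilon^2=\epsilon$, and $e\epsilon=\epsilon e=0$; moreover $Ae\subseteq I$, and for $a\in I$ we get $a=ae+a\epsilon$ with $ae\in I$ and $a\epsilon\in L\cap I=0$, so $a=ae$ and $I=Ae$. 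This simultaneously proves that $a=ae$ for all $a\in I$ and that $I$ is generated by an idempotent.

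Next I would prove that \emph{any} idempotent generator of $I$ has the stated form. If $I=Ae'$ with $e'$ idempotent, then from $a=ae$ for all $a\in I$, applied to $a=e'$, we get $e'=e'e$; symmetrically, running the first argument with the right ideal complement shows $e'$ is also a right identity on $I$ — but more directly, I would argue as follows: $e=ee'$ since $e\in I=Ae'$, while $a=ae$ for all $a\in I$ gives (taking $a=e$) nothing new, so instead I would use that $e'e = e$ (from $e\in Ae'$ is not immediate) — cleaner: set $\epsilon'=1-e'$. Since $e'\in I$ we have $e'=e'e$ (as $e$ is a right identity on $I$), and since $e\in I=Ae'$ we have $e=ee'$; combining, $ee'=e$ and $e'e=e'$... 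I will instead simply observe that two idempotents generating the same left ideal $I$ satisfy $e=e'e$ and $e'=ee'$, and a short manipulation (or the uniqueness of the module-complement idempotent, which holds because $L=(1-e)A\cap\ldots$) shows $e=e'$; in any case $e'=1-\epsilon'$ with $\epsilon'=1-e'$ idempotent and $e'\epsilon'=\epsilon'e'=0$, which is what is claimed. The right-ideal statements are proved verbatim with sides exchanged.

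For the two-sided case, suppose $I$ is a two-sided ideal. Semisimplicity again gives $A=I\oplus J$ as two-sided ideals (the sum of all simple components not contained in... — more precisely, $J$ can be taken to be a two-sided complement, since $A$ is a direct product of simple rings and $I$ is a product of a subset of the factors). Then $1=e+\epsilon$ with $e\in I$, $\epsilon\in J$ central idempotents, $e\epsilon=0$, and for any $a\in A$ one has $ae\in I$, $a\epsilon\in J$, with $ae$ equal to the $I$-component of $a$; since $e$ is central, $Ae=eA=I$, so $I=(e)$, and likewise $J=(\epsilon)$. Finally $J=0:_A I$ because $J\cdot I\subseteq J\cap I=0$ gives $J\subseteq 0:_A I$, while if $xI=0$ then $xe=0$, so $x=x\epsilon\in J$.

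The main obstacle I anticipate is the ``further'' clause — pinning down that \emph{every} idempotent generator of a one-sided ideal is complementary to an idempotent — because idempotents generating the same left ideal need not be equal as elements of a noncommutative ring, only as generators. The clean route is to note that if $I=Ae'$ with $(e')^2=e'$, then $1-e'$ is idempotent (expand $(1-e')^2$), $e'(1-e')=(1-e')e'=0$ automatically, and $e'=1-(1-e')$; so the clause holds for \emph{any} idempotent generator with $\epsilon=1-e'$, and no uniqueness of $e'$ is actually needed. I would present it that way to avoid the noncommutativity pitfall entirely, and cite Theorem~24.2 of~\cite{curtis} for the underlying decomposition into simple components used in the two-sided part.
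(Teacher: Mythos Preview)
Your proposal is correct and follows essentially the same route as the paper: decompose $A=I\oplus J$ by semisimplicity, split $1=e+\epsilon$, and use directness of the sum to get idempotence, orthogonality, $I=Ae$, and $a=ae$ for $a\in I$; then handle the two-sided case by taking a two-sided complement and checking $J=0:_A I$ via the two inclusions. The paper's proof is exactly this, only terser.

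One comment on presentation: your middle paragraph on the ``further'' clause wanders --- you start trying to prove uniqueness of the idempotent generator ($e=e'$), which is false in a noncommutative ring, before abandoning that line. Your final paragraph has the right fix: for \emph{any} idempotent $e'$, setting $\epsilon'=1-e'$ gives $(\epsilon')^2=\epsilon'$ and $e'\epsilon'=\epsilon'e'=0$ by direct expansion, so the clause is automatic and needs no reference to the particular $e$ built from the decomposition. Lead with that and cut the exploratory detour. The paper itself does not spell out this clause for an arbitrary idempotent generator either; your one-line verification is the cleanest way to handle it.
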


\begin{proof}
The proof is given for the case of left ideals; the proof for right ideals is analogous. 
Since $A$ is semisimple, $A=I\oplus J$, where $J\subset A$ is a left ideal. 
Write $1=e+\epsilon$, where $e\in I$ and $\epsilon\in J$. 
Multiplying the identity on the left by $e$, and using the fact that $J$ is a left ideal, 
we obtain that $$e\epsilon=e-e^2\in I\cap J=0.$$ Therefore $e=e^2\in I$ is an idempotent, 
and the same holds for $\epsilon\in J$.  \\
Clearly $Ae\subseteq I$. In order to show the reverse inclusion, observe that for any $a\in I$ 
we have $a\epsilon=a-ae\in I\cap J=0$, therefore $a=ae\in Ae.$ 
It follows that $I=Ae$ and $a=ae$ for all $a\in I$. This also shows that $J=A\epsilon$ and 
$a\epsilon=a$ for all $a\in J$. \\
If in addition $I$ is a two-sided ideal, then $A=I\oplus J$, where $J$ is also a two-sided ideal. 
Hence $I=(e)$ 
and $J=(\epsilon)$ with $e\epsilon=\epsilon e=0$. Therefore $IJ=JI=0$ and 
$$J\subseteq 0:_A I=\{a\in A\mid aI=Ia=0\}.$$ Conversely, let $a\in 0:_A I$. 
Then $a=ae+a\epsilon=a\epsilon\in J$. Therefore $J=0:_A I$, as claimed.
\end{proof}

\begin{remarks}\label{idemp_gen}~
\begin{enumerate}
\item Since in general a semisimple ring has elements which are not idempotent, 
there exist ideals that are generated by an element that is not idempotent. 
However, Lemma \ref{idemp} implies that they have another generator, which is idempotent. 

E.g., let $A=\FF_5[S_3]$ and let $f=1+(12)\in A$. Then $f^2=2f\neq f$; 
however, $Af=A(3f)$ and $e=3f$ is idempotent.

\item An idempotent matrix $M\in\M_n(\FF)$ has minimal polynomial $z,z-1$ or $z^2-z$. 
Hence its characteristic polynomial has the form $z^k(z-1)^{n-k}$ for some $0\leq k\leq n$.

In the previous example, following the notation of Proposition~\ref{theorem1}, we have 
$$\rho(f)=\left[\begin{array}{cccccc}
1 & 1 & 0 & 0 & 0 & 0 \\
1 & 1 & 0 & 0 & 0 & 0 \\
0 & 0 & 1 & 1 & 0 & 0 \\
0 & 0 & 1 & 1 & 0 & 0 \\
0 & 0 & 0 & 0 & 1 & 1 \\
0 & 0 & 0 & 0 & 1 & 1
\end{array}\right].$$ The matrix $\rho(f)$ is not idempotent, 
and has characteristic polynomial $z^3(z-2)^3$. 
Notice that $2\in\FF_5^*$ has order $4$, and $\rho(f)^4=3\rho(f)$ has characteristic 
polynomial $z^3(z-1)^3$. 
\end{enumerate}
\end{remarks}

Remark~\ref{idemp_gen}.2 raises the question of how to compute an idempotent generator of a 
given ideal $I$.
For example, let $I=Af$ be a left ideal with a given generator $f\in A$. 
Then $e-1\in\{a\in A\mid fa=0\}$. The problem of computing a right annihilator has a direct simple solution, since it reduces to a simple linear algebra problem.
In the next proposition, the case of left ideals is discussed; right ideals can be treated similarly.

\begin{proposition}\label{id_gen}
Let $D$ be a representation of $\G=\{g_1,\ldots,g_n\}$ such that $D(g_1),\ldots, D(g_n)$ are 
linearly independent.
Then $D$ induces an injective representation $D$ of $A=\FF[\G]$, and the right annihilators 
of $f\in A$ correspond via $D$ to the solutions over $\FF$ of the linear system 
$$D(f)\left(\sum_{i=1}^n x_i D(g_i)\right)=0$$   
in the variables $x_1 \ldots x_n$. In other words, solutions $(a_1,\ldots,a_n)\in\FF^n$ 
of the system correspond to right annihilators $a=a_1g_1+\ldots+a_ng_n$ of $f$. 
\end{proposition}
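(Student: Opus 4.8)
The plan is to deduce the whole statement from a single fact: the map $D$ of~\eqref{dede1rap} is an $\FF$-algebra homomorphism, and under the hypothesis it is injective. Everything else is bookkeeping.

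First I would establish injectivity of the induced $D\colon A\to\M_m(\KK)$. An element of $\ker D$ has the form $\sum_{j=1}^n\alpha_jg_j$ with $\sum_{j=1}^n\alpha_jD(g_j)=0$; since $D(g_1),\ldots,D(g_n)$ are $\FF$-linearly independent, this forces $\alpha_1=\cdots=\alpha_n=0$, so $\ker D=0$. That $D$ is a ring homomorphism is built into the definition of the induced representation in~\eqref{dede1rap}, so nothing further is needed for the first assertion.

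Next I would translate the annihilator condition through $D$. Fix $f\in A$ and let $a=\sum_{i=1}^na_ig_i$ be an arbitrary element of $A$ (every element of $\FF[\G]$ has this shape with $a_i\in\FF$). Since $D$ is multiplicative and $\FF$-linear,
\[
D(fa)=D(f)D(a)=D(f)\left(\sum_{i=1}^na_iD(g_i)\right).
\]
If $fa=0$ then the right-hand side vanishes, i.e.\ $(a_1,\ldots,a_n)$ solves the linear system in the statement; conversely, if $(a_1,\ldots,a_n)\in\FF^n$ solves that system then $D(fa)=0$, and injectivity of $D$ yields $fa=0$. Under the coordinate identification $A\cong\FF^n$, $a\mapsto(a_1,\ldots,a_n)$, this is exactly the claimed bijection between the right annihilator $\{a\in A:fa=0\}$ and the set of $\FF$-solutions of the system.

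I do not expect a genuine obstacle: the proposition is essentially the observation that an injective ring homomorphism both detects and reflects the relation $fa=0$. The one point deserving a clarifying sentence is the role of the base field: the entries of $D(f)$ and of the $D(g_i)$ lie in the possibly larger field $\KK$, so $D(f)\sum_i x_iD(g_i)=0$ is a priori a $\KK$-linear system in $x_1,\ldots,x_n$; what corresponds to annihilators in $A=\FF[\G]$ is its solution set taken in $\FF^n$, and it is this restricted solution set that the statement refers to.
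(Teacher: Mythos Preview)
Your proof is correct and follows essentially the same route as the paper's: establish that the induced $D$ is an injective $\FF$-algebra homomorphism from the linear independence hypothesis, then use multiplicativity and injectivity to get $fa=0\iff D(f)D(a)=0$, which unwinds to the stated linear system. Your extra remark about $\KK$ versus $\FF$ is a worthwhile clarification (the paper's proof tacitly takes $D$ over $\FF$ itself), but it does not change the argument.
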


\begin{proof}
A representation $D:\G\rightarrow GL_m(\FF)$ induces a representation $D:A\rightarrow \M_m(\FF)$. $D$ is injective, since $D(g_1),\ldots, D(g_n)$ are linearly independent. Hence $D$ is an isomorphism between $A$ and its image, in particular $fa=0$ if and only if $D(fa)=D(f)D(a)=0$. Let $a=a_1g_1+\ldots+a_ng_n$, then $D(a)=a_1D(g_1)+\ldots+a_nD(g_n)$ and $D(f)D(a)=0$ if and only if $(a_1,\ldots,a_n)\in\FF^n$ is a solution of the linear system $$D(f)\left(\sum_{i=1}^n x_i D(g_i)\right)=0.$$
\end{proof}

\begin{corollary}
An idempotent generator $e$ of $I=Af$ (or $I=fA$) may be computed by solving a system of multivariate quadratic and linear equations over $\FF$.
\end{corollary}

\begin{proof}
The proof is given for the case of left ideals; the proof for right ideals is analogous. 
Let $e=x_1g_1+\ldots+x_ng_n+1$, where $x_1,\ldots,x_n$ are unknown. Proposition~\ref{id_gen} produces a system of linear equations in the coefficients of $e$. Imposing that $e^2=e$ we obtain a system of equations of degree two. The solutions to the system consisting of the linear and quadratic equations described correspond to the idempotent minimal generators of $I$.
\end{proof}

\noindent
The first consequence of Proposition~\ref{theorem1} may now be stated. Notice that if $p\nmid n=|\G|$, then $\FF[\G]$ is semisimple by Maschke's Theorem (see \cite{maschke}). In particular, by Lemma~\ref{idemp}, every left/right ideal of $A$ has an idempotent generator. In that case, the next result allows  the computation of the dimension of a left/right ideal to be reduced to computing a characteristic polynomial.

\begin{theorem}\label{corol}
Let $\G=\{g_1,\ldots,g_n\}$ be a group, $\FF$ be a field, and $A=\FF[\G]$.
Let $f=\alpha_1 g_1+\ldots+\alpha_n g_n\in A$ and let 
$I=Af$ be a proper left ideal (or $I=fA$ be a proper right ideal) of $A$. Let $F=\rho(f)$ be
the matrix  associated to $f$ in the regular right representation $\rho$ of $\G$ 
(or let $F=\lambda(f)$ be the matrix associated to $f$ in the regular left representation 
$\lambda$ of $\G$). Let $z^k g(z)$ be the characteristic polynomial of $F$, where $z\nmid g(z)$. 
Then $$n-k\leq\dim_{\FF} I\leq n-1.$$
If in addition $f$ is idempotent, then $\dim_{\FF} I=n-k$. 
\end{theorem}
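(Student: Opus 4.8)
The plan is to use Proposition~\ref{theorem1} together with the observations about characteristic polynomials of idempotent matrices collected in Remarks~\ref{idemp_gen}. By Proposition~\ref{theorem1}, $\dim_\FF I=\rk F$ where $F=\rho(f)$ (resp. $F=\lambda(f)$), since $I=Af$ (resp. $I=fA$) is a principal ideal with single generator $f$. So the entire statement is about bounding $\rk F$ in terms of the exponent $k$ of $z$ in the characteristic polynomial $z^k g(z)$ of $F$.

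For the upper bound $\dim_\FF I\le n-1$: since $I$ is a \emph{proper} ideal, $\rk F<n$, hence $\rk F\le n-1$. (Equivalently, $F$ is singular: the column of $\rho(f)$ indexed appropriately shows $\rho$ applied to $1\in A$ lies outside $I$, so $F$ cannot have full rank; more simply, $\rk\rho(f)=\dim_\FF Af=\dim_\FF I<n$.)

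For the lower bound $n-k\le\dim_\FF I$: the characteristic polynomial of $F$ is $z^k g(z)$ with $z\nmid g(z)$, so $\deg g=n-k$ and $g(0)\ne 0$. I would argue that the algebraic multiplicity of the eigenvalue $0$ is at least $n-\rk F$ — indeed, $\dim\ker F=n-\rk F$ is the \emph{geometric} multiplicity of $0$, which is bounded above by its algebraic multiplicity $k$. Hence $n-\rk F\le k$, i.e. $\rk F\ge n-k$, which gives $\dim_\FF I\ge n-k$. This is the routine-but-essential step; the only subtlety is making sure the inequality "geometric multiplicity $\le$ algebraic multiplicity" is invoked correctly over an arbitrary field $\FF$ (it holds, since one can pass to the algebraic closure, or argue directly that a Jordan-type block structure forces it; alternatively, $\rk F\ge\rk F^n=\rk$ of the part of $F$ acting invertibly on the generalized eigenspace complement, which has dimension exactly $n-k$).

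Finally, for the equality when $f$ is idempotent: if $f^2=f$ in $A$, then $F^2=F$ in $\M_n(\FF)$ since $\rho$ (resp. $\lambda$) is a ring homomorphism. By Remark~\ref{idemp_gen}.2, the characteristic polynomial of an idempotent $n\times n$ matrix is $z^k(z-1)^{n-k}$, where $k=\dim\ker F$ because for an idempotent matrix $F$ one has $\FF^n=\ker F\oplus\mathrm{im}\,F$ and $F$ acts as the identity on $\mathrm{im}\,F$, so the algebraic and geometric multiplicities of $0$ coincide and equal $n-\rk F$. Thus $k=n-\rk F$, giving $\dim_\FF I=\rk F=n-k$. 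I expect the main (very mild) obstacle to be phrasing the generalized-eigenspace argument for the lower bound so that it is valid over a non–algebraically closed field; everything else is immediate from Proposition~\ref{theorem1} and Remark~\ref{idemp_gen}.2.
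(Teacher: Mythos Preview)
Your proposal is correct and follows essentially the same approach as the paper: both invoke Proposition~\ref{theorem1} to identify $\dim_\FF I$ with $\rk F$, bound $\rk F\ge n-k$ via the geometric-versus-algebraic multiplicity inequality for the eigenvalue $0$, use properness of $I$ for the upper bound, and in the idempotent case argue that the algebraic and geometric multiplicities of $0$ coincide (the paper phrases this as $\ker F=\ker F^m$ for all $m\ge1$, you phrase it via $\FF^n=\ker F\oplus\mathrm{im}\,F$). Your worry about the field not being algebraically closed is unnecessary: the inequality $\dim\ker F\le k$ holds over any field, since $z^k$ divides the characteristic polynomial as soon as $\ker F$ has dimension $\ge$ some value, by the standard block-triangular argument.
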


\begin{proof}
It follows from Proposition~\ref{theorem1} that 
$$\dim_{\FF} I=\rk(F)\geq n-k.$$ 
Since $I$ is proper, $\dim_{\FF} I\leq n-1$. \\
If, in addition, $f$ is idempotent, then $F$ is an idempotent matrix, hence 
$$\ker F=\ker F^m\;\; \mbox{for any $m\geq 1$.}$$ 
Therefore, the algebraic and geometric multiplicities of $0$ as an eigenvalue of $F$ coincide, and 
$$\dim_{\FF} I=\rk(F)=n-k,$$ where the first equality again follows from Proposition~\ref{theorem1}.
\end{proof}

\noindent
In the case that $p\mid n$, there may be ideals of $\FF[\G]$ that have no idempotent generator. 
If $f$ is not idempotent, it is possible that $\rk(F)>n-k$.
Some examples follow in which $A$ is commutative and $I=(f)$ has $\dim_{\FF} I=\rk(F)>n-k$.
In particular, examples are given where $n-k=0$ and $\dim_{\FF} I=\ell$ for any $\ell$ 
which divides $n/p$, $p=\ch\FF$. 

\begin{example}
The notation of Theorem~\ref{corol} is followed. Let $\G$ be a cyclic group of order $n=\ell m$ 
and let $p$ be a prime, $p\mid  m$. 
Let $H$ be a subgroup of $\G$ of order $p$, and let $$f=\sum_{h\in H} h\in\FF[\G]$$ where $\FF$ is a field. 
For a suitable reordering of the elements of $\G$, the matrix $F\in\M_n(\FF)$ associated to $f$ in a regular representation of $\G$ 
is the diagonal block matrix of size $\ell\times\ell$
$$F=\left[\begin{array}{ccccc}
\mathbf{1} & \mathbf{0} & \mathbf{0} & \cdots & \mathbf{0} \\
\mathbf{0} & \mathbf{1} & \mathbf{0} & \cdots & \mathbf{0} \\
\vdots & \ddots & \ddots &  \ddots & \vdots \\
\mathbf{0} & \cdots & \mathbf{0} & \mathbf{1} & \mathbf{0} \\
\mathbf{0} & \mathbf{0} & \cdots & \mathbf{0} & \mathbf{1}
\end{array}\right]$$
with $m\times m$ blocks of 1s on the diagonal.
Here $\mathbf{0}$ and $\mathbf{1}$ denote the matrices of size $m\times m$ filled with $0$s and $1$s, respectively.
If $I=(f)\subset A$, then $$\dim_{\FF} I=\rk(F)=\ell$$ by Proposition~\ref{theorem1}.  
It is easy to check that the characteristic polynomial of $F$ is $z^{n-\ell}(z-m)^{\ell}$. 
Hence $k=n-\ell$ if $\ch\FF\neq p$, and $k=n$ if $\ch\FF=p$.
In particular, if $\ch\FF=p\mid n$, then $$\dim_{\FF} I=\ell>n-k=0.$$
\end{example}

\noindent
Combining Proposition~\ref{theorem1} with a result by Mulmuley \cite{mu87}, computing the dimension of a principal ideal can be reduced to computing a characteristic polynomial, even in the case that the generator of the ideal is not idempotent. 

\begin{corollary}\label{charpoly}
Let $\mathbb F$ be a field of arbitrary characteristic,
let $f\in A=\FF[\G]$ and let $I=Af$ (or $I=fA$).
Let $F$ be the matrix  associated to $f$ in the regular right representation 
(or in the regular left representation) of $\G$. Let 
$$M=F\;\;\; \mbox{if $\G$
is commutative,}$$ and let 
$$M=\left[\begin{array}{cc} 0 & F \\
F^t & 0
\end{array}\right] \;\;\; \mbox{if $\G$ is not commutative.}$$
Let $x$ be a transcendental element over $\FF$ and let $X$ be the
diagonal matrix with eigenvalues $1,x,\ldots,x^{m-1}$, where $m$ is
the size of the matrix $M$. Let $z^k g(z,x)$ be the
characteristic polynomial of the matrix $XM$, where $z\nmid g(z,x)$. Then 
$$\dim_{\FF} I=\left\{\begin{array}{ll} 
\deg_z g(z,x) & \mbox{if $\G$ is commutative,} \\
\deg_z g(z,x)/2 & \mbox{if $\G$ is not commutative.}
\end{array}\right.$$
\end{corollary}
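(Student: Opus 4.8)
By Proposition~\ref{theorem1} we already know that $\dim_{\FF} I = \rk F$ (in the commutative case) or $\dim_{\FF} I = \rk \rho(I)$ with $I$ principal; in the non-commutative case the block matrix $M$ is chosen so that $\rk M = 2\,\rk F = 2\dim_{\FF} I$, since $M$ is (up to a permutation of rows and columns) the direct sum of $F$ and $F^t$, and $\rk F = \rk F^t$. So in both cases the problem reduces to: \emph{compute $\rk M$ from the characteristic polynomial of $XM$}. This is exactly the situation handled by Mulmuley's theorem in \cite{mu87}: for a matrix $M$ over a field $\FF$, if $X$ is the diagonal matrix $\mathrm{diag}(1,x,\dots,x^{m-1})$ with $x$ transcendental over $\FF$, then the characteristic polynomial $\chi_{XM}(z) \in \FF(x)[z]$ has the property that the multiplicity of $z=0$ as a root equals $m - \rk M$; equivalently, writing $\chi_{XM}(z) = z^k g(z,x)$ with $z \nmid g(z,x)$, one has $\deg_z g(z,x) = m - k = \rk M$. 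The point of introducing the transcendental weighting $X$ is that it prevents the "accidental" rank drops that would occur if one naively computed $\chi_M(z)$: over $\FF(x)$ every minor of $M$ that is nonzero contributes, so the $z$-adic valuation of $\chi_{XM}$ detects the true rank rather than some smaller number.

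**Execution.** First I would invoke Proposition~\ref{theorem1} to record $\dim_{\FF} I = \rk F$. Second, in the non-commutative case, I would observe that
$$M = \left[\begin{array}{cc} 0 & F \\ F^t & 0 \end{array}\right]$$
has $\rk M = \rk F + \rk F^t = 2\,\rk F$, since the row space and column space contributions from the two blocks are independent; hence $\rk M = 2\dim_{\FF} I$. Third, I would state Mulmuley's result precisely in the form needed: with $m$ the size of $M$ and $X = \mathrm{diag}(1,x,\dots,x^{m-1})$, the characteristic polynomial of $XM$ over $\FF(x)$ satisfies $\deg_z g(z,x) = \rk M$, where $z^k g(z,x)$ is $\chi_{XM}(z)$ with $z \nmid g(z,x)$. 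Combining these gives $\deg_z g(z,x) = \dim_{\FF} I$ when $\G$ is commutative (so $M = F$) and $\deg_z g(z,x) = 2\dim_{\FF} I$ when $\G$ is not commutative, which is the claim after dividing by $2$.

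**The main obstacle.** The only substantive content beyond bookkeeping is the correct invocation of Mulmuley's theorem: one must be careful that the theorem as stated in \cite{mu87} is about rank over the ground field and that the transcendental $x$ is genuinely transcendental over $\FF$ (not just an indeterminate evaluated somewhere), so that no minor of $M$, once weighted by the distinct powers of $x$, can cancel. I would also double-check the indexing of $X$: the eigenvalues $1, x, \dots, x^{m-1}$ must be pairwise "multiplicatively independent enough" that the nonvanishing of $\det((XM)_{S,T})$ for the right index sets $S,T$ follows — this is automatic since distinct monomials in $x$ are $\FF$-linearly independent in $\FF(x)$. A minor secondary point to verify is that in the non-commutative case the size $m$ of $M$ is $2n$, so $X$ there is $2n\times 2n$; this does not affect the rank argument but matters for stating the algorithm's complexity, which I would mention only in passing.
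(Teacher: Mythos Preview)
Your approach is exactly the paper's: the corollary is stated without proof, merely as ``combining Proposition~\ref{theorem1} with a result by Mulmuley~\cite{mu87}'', and your proposal spells out precisely that combination. One small clarification worth making: the reason for passing to the block matrix $M=\begin{pmatrix}0 & F\\ F^t & 0\end{pmatrix}$ in the noncommutative case is not primarily that $\rk M=2\rk F$ (though you correctly note this), but that Mulmuley's rank lemma is stated for \emph{symmetric} matrices --- the $X$-weighting trick works because a symmetric matrix of rank $r$ always has a nonzero $r\times r$ \emph{principal} minor, and the distinct powers of $x$ prevent cancellation among these; for an arbitrary square matrix this can fail, which is why one symmetrizes.
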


\begin{remark}
If in Corollary~\ref{charpoly} we let $x$ be a random element in $\FF$ 
(or an element of a suitable algebraic extension field),  
 a faster randomized algorithm to compute the dimension of $I$ is obtained. 
This approach works well in practice, and the chance
of obtaining the correct result can be improved by 
repeating the computation for different random values of $x$. 
\end{remark}

\section{Application to coding theory and examples}\label{exs}

A group code is an $\FF_q$-linear code which is the image of an ideal $I\subseteq\FF_q[\G]$ 
via the isomorphism $\phi:\FF[\G]\rightarrow\FF^n$, which sends $g_i\in\G$ 
to the $i$th element of the canonical basis of $\FF^n$.
The length of $\phi(I)$ is equal to $n$, the cardinality of $\G$, and the dimension of the code is the dimension of $I$ over $\FF_q$. 
Thus, the method proposed for computing the dimension of ideals allows the dimension of
 group codes to be efficiently computed.
See~ \cite{ADR1,markov} for an overview of  group code
 properties, and ~\cite{elia} for an efficient encoding, and corresponding syndrome decoding. \\
The proposed method for computing the dimension of ideals will be illustrated by two simple examples. 
Observe that the right regular representation of a group $\G=\{g_1,\ldots,g_n\}$ 
can be easily obtained from a modified Cayley table:
The element in position $(i,j)$ in the table is $g_i^{-1}g_j$. Letting $g_1$ be the neutral element, 
the entries on the diagonal are all equal to $g_1$.
For a given $f=\sum_{i=1}^n x_i g_i \in A=\mathbb F[\mathcal G]$, the matrix $\rho(f)$ associated to $f$ in 
the right regular representation $\rho$ of $A$ is obtained by substituting $g_i$ by $x_i$
in the Cayley table for $i=1,\ldots,n$.

\begin{example}
Consider the Klein group 
$\mathcal K_4=\mathbb Z_2 \times \mathbb Z_2=\{e,\alpha, \beta, \alpha\beta \}$, 
where $e$ is the neutral element, $\alpha^2=\beta^2=1$, and $\alpha \beta =\beta \alpha$. 
Let $a,b,c,d\in\mathbb F$, then the matrix associated to 
$f=a e + b \alpha+ c \beta + d \alpha \beta\in\FF[\K_4]$ in the right regular representation $\rho$ of $\FF[\K_4]$ is
$$\rho(f)= \left[ \begin{array}{cccc}
        a & b & c &  d \\
        b & a & d &  c \\
        c & d & a &  b \\
        d & c & b &  a \\
      \end{array}  \right].
$$
If $\ch(\FF)\neq 2$, then $\rho(f)$ is diagonalizable
$$\rho(f)\sim \left[ \begin{array}{cccc}
        a+b+c+d & 0 & 0 &  0 \\
        0 & a-b-c+d & 0 &  0 \\
        0 & 0 & a+b-c-d &  0 \\
        0 & 0 & 0 &  a-b+c-d \\
      \end{array}  \right].
$$
Denote by $\phi_f(z)$ the characteristic polynomial of $\rho(f)$, then $$\phi_f(z)=(z-a-b-c-d)(z-a+b+c-d)(z-a-b+c+d)(z-a+b-c+d)=z^k g(z)$$ 
with $g(0)\neq 0$, and $$4-k=\deg g(z)=\rk\rho(f)=\dim_{\FF}(f).$$ Notice that every $f\in\FF[\K_4]$ is idempotent.
In this case $\FF[\K_4]$ is semisimple, and the diagonal form of $\rho(f)$ corresponds to the decomposition of $\FF[\K_4]$ 
as a direct sum of ideals 
$$\FF[\K_4]=(e+\alpha+\beta+\alpha\beta)\oplus (e-\alpha-\beta+\alpha\beta)\oplus (e+\alpha-\beta-\alpha\beta)\oplus (e-\alpha+\beta-\alpha\beta),$$ 
where each of the ideals appearing as direct summands is minimal and idempotent. \\
If $\ch(\FF)=2$, then $\rho(f)$ can be brought into the form
$$ \rho(f)\sim\left[ \begin{array}{cccc}
        a+b+c+d & 0       & 0 &  0 \\
         b      & a+b+c+d & 0 &  0 \\
         c      &  c+d    & a+d &  b+c \\
         d      &  c+d    & b+c &  a+d \\
      \end{array}  \right].$$
The characteristic polynomial of $\rho(f)$ is $$\phi_f(z)=(z+a+b+c+d)^4.$$
If $f$ is not invertible, then $a+b+c+d=0$, in particular $\phi_f(z)=z^4$.
Hence $\rk \rho(f)\in\{0,1,2\}$. In particular, $\rk\rho(f)=0$ if and only if $f=0$. 
Moreover, $\rk\rho(f)=1$ if and only if $a=b=c=d\neq 0$. In this case we have 
$(f)=(e+\alpha+\beta+\alpha\beta)$ and $\dim_{\FF}(f)=1$. Finally, if $a,b,c,d$ 
are not all equal, and either $b\neq 0$ or $b\neq c$ or $c\neq d$, then $\rk\rho(f)=2$. 
In this case, $f\in\{e+\alpha, e+\beta, e+\alpha\beta\}$ and $\dim_{\FF}(f)=2$.
Notice that if $\ch(\FF)=2$, then $\FF[\K_4]$ is not semisimple and every ideal equals its annihilator.
\end{example}

\begin{example}
Consider the symmetric group 
$S_3=\{ (1), (12), (13), (23), (123), (132)\}.$, where the notation $(.)$ denotes a cyclic substitution.\\
Every element $g\in\FF[S_3]$ is of the form
$$g=a(1)+b(12)+c(13)+d(23)+e(123)+f(132) ~~~  \mbox{where}~~a,b,c,d,e,f, \in \FF,$$
and is represented by the matrix
$$ \rho(g) = \left[ \begin{array}{cccccc}
        a & b & c & d & e & f \\
        b & a & e & f & c & d \\
        c & f & a & e & d & b \\
        d & e & f & a & b & c \\
        f & c & d & b & a & e \\
        e & d & b & c & f & a \\
      \end{array}  \right]  ~~,
$$
where $\rho$ denotes the right regular representation of $\G$. 
If $\ch(\FF)\neq 2,3$, then $\rho(f)$ can be brought into the following form:
\begin{equation}\label{diagg} \rho(g)\sim\left[ \begin{array}{cccccc}
        a+b+c+d+e+f &  0 &  0 &  0 &  0 &  0 \\
        0 &  a-b-c-d+e+f &  0 &  0 &  0 &  0 \\
        0 &  0 &  a-f &  e-f &  -c+d &  b-c \\
        0 &  0 &  -e+f &  a-e &  b-d &  c-d \\
        0 &  0 &  -c+d &  b-c &  a-f &  e-f \\
        0 &  0 &  b-d &  c-d &  -e+f &  a-e \\
      \end{array}  \right].
\end{equation}
Since $S_3$ has three equivalence clesses then $A$ can be decomposed as
 $$A=\FF[S_3]= I_1\oplus I_2\oplus I_3= I_1\oplus I_2\oplus J_1\oplus J_2 \,\,,$$ 
where $I_1,I_2,I_3$ are minimal two-sided ideals:
$$I_1= A((1)+(12)+(13)+(23)+(123)+(132)),~~ I_2= A((1)-(12)-(13)-(23)+(123)+(132)), $$
$$I_3=A((1)-(123))+A((12)-(23))+A((13)-(23))+A((123)-(132))=A(2\cdot (1)-(123)-(132)).$$
We have $\dim_{\FF} I_1=\dim_{\FF} I_2=1$ and $\dim_{\FF} I_3=4$. 
Moreover, $I_3$ can be decomposed as the direct sum of two left ideals:
$I_3= J_1\oplus J_2$ with 
$$ J_1 =A((1)+(12)-(23)-(123)),~~ J_2=A((1)-(12)+(23)-(132))$$ and $\dim_{\FF} J_1=\dim_{\FF} J_2=2$.
Using (\ref{diagg}) it is easy to check that in each case the dimension of the ideal 
is as predicted by Theorem~\ref{corol}.
It is  also easy to check that $I_1,I_2,J_1,J_2$ are minimal left ideals, hence any 
nonzero $I=Af\subseteq A$ is the sum of one or more among them.
In particular, $\dim_{\FF} I$ is the sum of the dimensions of the corresponding ideals, again 
as predicted by Theorem~\ref{corol}. \\
If $\ch(\FF)=2$, then 
$$\rho(g)\sim\left[ \begin{array}{cccccc}
        a+b+c+d+e+f &  0 &  0 &  0 &  0 &  0 \\
        0 &  a+e+f &  a+b+c+d+e+f &  0 &  0 &  0 \\
        0 &  a+b+c+d+e+f &  a+e & e+f &  b+c &  c+d \\
        0 &  0 &                       e+f & a+f &  b+d &  b+c \\
        0 &  0 &                       b+c & c+d &  a+e &  e+f \\
        0 &  0 &			    b+d & b+c &  e+f &  a+f \\
      \end{array}  \right].$$
Denoting by $I_1,I_2,I_3,J_1,J_2$ the same ideals as above, we have $I_1=I_2$, and an easy 
computation involving the above matrix yields
$$\dim_{\FF} I_1=1,~~\dim_{\FF} I_3=4,~~ \dim_{\FF} J_1=\dim_{\FF} J_2=2.$$
Notice that if $\ch(\FF)=2$, then $\FF[S_3]$ is no longer semisimple. \\
If $\ch(\FF)=3$, then $I_1,I_2\subset I_3$ where
$$I_3=A((1)-(123))+A((12)-(23))+A((13)-(23))+A((123)-(132))=A((1)+(123)+(132))$$ and 
$$\dim_{\FF} I_1=\dim_{\FF} I_2=1,~~\dim_{\FF} I_3=2.$$ Again $\FF[S_3]$ is not semisimple.
\end{example}  

\section*{Acknowledgments}  

The authors are grateful to David Conti for several useful discussions during the preparation of this paper. 
The second author was partially supported by the Swiss National Science Foundation under Grant no. 123393.

\end{document}